\newcommand{\Rmnum}[1]{\expandafter\@slowromancap\romannumeral #1@}
\newtheorem{theorem}{Theorem}
\newtheorem{lemma}{Lemma}
\newtheorem{proposition}{Proposition}
\long\def\comment#1{}
\newfont{\bbb}{msbm10 scaled 700}
\newfont{\bb}{msbm10 scaled 1100}
\newcommand{\RR}{\mbox{\bb R}}
\newcommand{\EE}{\mbox{\bb E}}
\newcommand{\yv}{{\bf y}}
\newcommand{\zerov}{{\bf 0}}
\newcommand{\Hm}{{\bf H}}
\newcommand{\Id}{{\bf I}}
\newcommand{\Sm}{{\bf S}}
\newcommand{\Vm}{{\bf V}}
\newcommand{\Tt}{\text{T}}
\newcommand{\Hc}{{\cal H}}
\newcommand{\Nc}{{\cal N}}
\newcommand{\Sc}{{\cal S}}
\newcommand{\RNum}[1]{\uppercase\expandafter{\romannumeral #1\relax}}
\newcommand{\Sigmam}{\hbox{\boldmath$\Sigma$}}
\newcommand{\diag}{{\hbox{diag}}}
\newcommand{\trace}{{\hbox{tr}}}
\def\LRT#1#2{\!
\raisebox{.2ex}{$
{{\scriptstyle\;#1}\atop{\displaystyle\gtrless}}
\atop
{\raisebox{-1.25ex}{$\scriptstyle\;#2$}}
$}
\!}
\begin{document}
%
\title{Information-Theoretic Attacks in the Smart Grid}


\author{\IEEEauthorblockN{Ke Sun$^*$, I\~naki Esnaola$^{*\S}$, Samir M. Perlaza$^{\dag\S}$, and H. Vincent Poor$^\S$}
\IEEEauthorblockA{$^*$Dept. of Automatic  Control and Systems Engineering, University of Sheffield, Sheffield S1 3JD, UK\\
$^\dag$Institut National de Recherche en Informatique et en Automatique (INRIA), Lyon, France\\
 $^\S$Dept. of Electrical Engineering, Princeton University, Princeton, NJ 08544, USA
}
\thanks{This research was supported in part by the European Commission under Marie Sk\l{}odowska-Curie Individual Fellowship No. 659316 and Euro-Mediterranean Cooperation ERA-NET project COM-MED.
The work of H. Vincent Poor was supported in part by the U.S. National Science Foundation under Grants CMMI-1435778, CNS-1702808 and ECCS-1647198.
 Ke Sun acknowledges the support of China Scholarship Council (CSC).}

}



%


\maketitle

\begin{abstract}
Gaussian random attacks that jointly minimize the amount of information obtained by the operator from the grid and the probability of attack detection are presented. The construction of the attack is posed as an optimization problem with a utility function that captures two effects: firstly, minimizing the mutual information between the measurements and the state variables; secondly, minimizing the probability of attack detection via the Kullback-Leibler divergence between the distribution of the measurements with an attack and the distribution of the measurements without an attack. Additionally, a lower bound on the utility function achieved by the attacks constructed with imperfect knowledge of the second order statistics of the state variables is obtained. 
The performance of the attack construction using the sample covariance matrix of the state variables is numerically evaluated.
The above results are tested in the IEEE 30-Bus test system.


\end{abstract}


%
\IEEEpeerreviewmaketitle

\section{Introduction}
The smart grid is expected to address the challenges posed by the move towards decentralized renewable energy generation. At the core of the smart grid lies an advanced sensing and communication network embedded in the electric power grid that provides the cyber backbone for the emerging cyberphysical energy system. The resulting system is expected to operate more efficiently while addressing some of the stability and resilience issues such as the ones causing the 2003 North American outage \cite{_final_2004}. While the increased sensing and communication enables the implementation of advanced control and management procedures, the cyber layer also opens the door to malicious attacks. The cybersecurity threats to which the smart grid is exposed are not well understood yet, and therefore, practical security solutions need to come forth as a multidisciplinary effort combining technologies such as cryptography, advanced machine learning, and information-theoretic security \cite{RS_progress_2016}.

Data injection attacks (DIAs) \cite{liu_false_2009}, which exploit the sensing infrastructure used in state estimation by the network operator, are an immediate concern. DIA strategies involve corrupting the estimate that the operator obtains of the state of the grid by tampering with the measurements produced by the sensing infrastructure. In \cite{liu_false_2009} unobservable attacks are proposed for the case in which the operator performs least squares (LS) estimation. The case in which the attacker has access to a limited number of sensors is analyzed in \cite{kim_strategic_2011, sandberg_security_2010} and \cite{sou_exact_2013}. Therein it is shown that the attacker can circumvent the access constraints using an $\ell_{1}$-minimization approach to construct a sparse attack vector. This approach is extended to the case in which multiple attackers control a subset of the nodes of the grid and coordinate to distributely construct unobservable attacks \cite{tajer_distributed_2011} and \cite{ ozay_sparse_2013}.

Bayesian frameworks are considered for transmission grids in \cite{kosut_malicious_2011} and \cite{esnaola_maximum_2016} and for distribution grids in \cite{genes_recovering_2016} by assuming a multivariate Gaussian distribution for the state variables. In this case the operator performs minimum mean square error (MMSE) estimation. Two attack  constructions that tradeoff the distortion introduced in the estimates with the probability of attack detection are presented in \cite{kosut_malicious_2011}. Maximum MMSE distortion attacks and their construction in a decentralized game-theoretic setting are studied in  \cite{esnaola_maximum_2016}.
Within an AC state estimation setting partial knowledge of the state variables is used to create unobservable attacks in \cite{liang_cyber_2014}.

The introduction of a probabilistic description of the state variables enables using information measures in the analysis of state estimation problems. For instance, a sensor placement strategy that accounts for the amount of information acquired by the sensing infrastructure is studied in \cite{li_information-theoretic_2013}. Information-theoretic security tools are employed to provide privacy guarantees in systems with smart meters \cite{sankar_smart_2013, tan_increasing_2013} and \cite{arrieta_smart_2017}. However, information-theoretic security in a smart grid state estimation context is still not well understood. In this paper, the attack construction is studied in terms of information measures to quantify the information loss that the attack causes to the operator and the probability of attack detection. The utility function that arises is analyzed in \cite{hou_effective_2014} in the context of {\it stealth} communications. Using this utility function an optimal Gaussian attack construction is obtained.
The impact of imperfect second order statistics is analyzed by considering a sample covariance estimate of the state variables as the available prior knowledge for the attacker.

The organization of the rest of this paper is as follows. Section \ref{System_Model} presents a Bayesian system model for state estimation with linearized system dynamics. A stealthy attack construction based on an information-theoretic performance measure is proposed in Section \ref{SEC:ITA}. The impact of imperfect second order statistics obtained via a limited training data set is discussed in Section \ref{SEC:AISOS}.
Section \ref{SEC:NS} numerically evaluates the performance of the proposed attack strategy on the IEEE 30-Bus test system.
The paper ends with conclusions in Section \ref{SEC:C}.

\section{System Model}\label{System_Model}
\subsection{Bayesian Framework for State Estimation}
Linearized system dynamics are considered for the state estimation problem. The resulting observation model is given by
\begin{align} \label{Equ:ObNoAtt}
Y^{M} = \Hm X^{N} + Z^{M},
\end{align}
where $X^{N} \in \RR^{N}$ is a vector of random variables describing the true state of the system;
$\Hm \in \RR^{M\times N}$ is the Jacobian of the linearized system dynamics which is determined by the power network topology and the admittances of the branches;
$Y^{M} \in \RR^{M }$ is a vector of random variables containing the measurements available to the attacker;
and $Z^{M} \in \RR^{M } $ is the additive white Gaussian noise (AWGN) introduced by the measurements \cite{abur_power_2004, grainger_power_1994}, i.e. the vector of random variables $Z^{M}$ follows a multivariate Gaussian distribution $\Nc(\zerov,\sigma^{2} \Id_{M})$.
The state variables are also assumed to follow a multivariate Gaussian distribution denoted by
\begin{align} \label{Equ:MG_sv}
  X^{N} \sim \Nc(\zerov,\Sigmam_{X\!X}),
\end{align}
where $\Sigmam_{X\!X}$ is the covariance matrix of the state variables.
Consequently, from (\ref{Equ:ObNoAtt}), the measurement vector also follows a multivariate Gaussian distribution given by
\begin{align}
  Y^{M} \sim \Nc(\zerov,\Sigmam_{Y\!Y}),
\end{align}
where $\Sigmam_{Y\!Y} = \Hm\Sigmam_{X\!X}\Hm^{\Tt} + \sigma^{2}\Id_{M} $ is the covariance matrix of the measurements.

Given the stochastic nature of the state variables, it is reasonable for the attacker to pursue a stochastic attack construction strategy.
The performance of the attack is therefore assessed in terms of the average performance that is achieved with multiple attack realizations.
In the following, an attack vector independent of the state variables is constructed following a multivariate Gaussian distribution denoted by
\begin{align}
  A^{M} \sim  \Nc (\zerov,\Sigmam_{A\!A}),
\end{align}
where $\Sigmam_{A\!A}$ is the covariance matrix of the attack vector.
It is worth noting that the independence of the attack vector with respect to the state variables implies that the attacker does not
need to know the joint distribution of the state variables and the measurements to construct the attack vector.
Knowledge of the second order moments of the state variables and the variance of the AWGN introduced by the observation process suffices to construct the attack. This assumption significantly reduces the difficulty of the attack construction.

The resulting observation model for the case in which the measurements are compromised is given by
\begin{align}
\label{eq:sys_a}
Y^{M}_{A} = \Hm X^{N} + Z^{M} + A^{M},
\end{align}
where $A^{M} \in \RR^{M}$ is the attack vector \cite{liu_false_2009}.
The compromised measurements, $Y^{M}_{A}$, follow a multivariate Gaussian distribution described as
\begin{align}
  Y_{A}^{M} \sim \Nc(\zerov,\Sigmam_{Y_{A}\!Y_{A}}),
\end{align}
where $\Sigmam_{Y_{A}\!Y_{A}} = \Hm\Sigmam_{X\!X}\Hm^{\Tt} + \sigma^{2}\Id_{M} + \Sigmam_{A\!A} $.

The operator utilizes the measurements obtained from the grid to detect the presence of an attack.
The attack detection problem is cast into a hypothesis testing problem with hypotheses
\vspace{-2mm}
\begin{eqnarray}
\Hc_{0}: & & Y^M \sim \Nc(\zerov,\Sigmam_{Y_{A}\!Y_{A}}),\quad\textnormal{versus} \\
\Hc_{1}: & & Y^M \sim \Nc(\zerov,\Sigmam_{Y\!Y}).
\end{eqnarray}

The Neyman-Pearson Lemma \cite{neyman_problem_1992} states that for a fixed Type \RNum{1} probability of error,
the likelihood ratio test achieves the minimum Type \RNum{2} probability of error $\beta$, when compared with any other tests with an equal or smaller Type \RNum{1} probability of  error $\alpha$.
In view of this, a likelihood ratio test is chosen as the attack detection strategy.
The likelihood ratio test between $\mathcal{H}_{0}$ and $\mathcal{H}_{1}$ takes the following form
\begin{equation}\label{LHRT}
L(\yv) =\frac{f_{Y^{M}_{A}}(\yv)}{f_{Y^{M}}(\yv)} \ \LRT{\Hc_{0}}{\Hc_{1}} \ \tau,
\end{equation}
where $\yv \in \RR^{M}$ is a realization of the vector of random variables modelling the measurements, $f_{Y_A^M}$ and  $f_{Y^M}$ denote the probability density functions of $Y_A^M$ and  $Y^M$, respectively, and $\tau$ is the decision threshold set by the operator to meet the false alarm constraint.

\subsection{Information-Theoretic Setting}
\label{sec:IT_setting}

The aim of the attacker when tampering with the measurements is twofold:
first, to minimize the information that the operator acquires about the state variables from the grid measurements;
second, to minimize the probability of the attack being detected by the operator.
Capitalizing on the Bayesian framework, an information-theoretic criterion for the attack construction is adopted.
To satisfy the first objective, the attacker minimizes the mutual information between the state variables and the compromised measurements.
Specifically, the attacker constructs the attack vector, i.e. chooses the distribution of the attack vector, in such a way that it minimizes $I(X^{N};Y_{A}^{M})$.
This is equivalent to guaranteeing that the amount of information that the operator acquires about the state variables $X$ by observing $Y$ is minimized.
%


On the other hand,
the probability of attack detection is determined by the detection threshold $\tau$ and the distribution induced by the attack on the measurements.
Larger values of $\tau$ yield lower probability of detection.
The Chernoff-Stein Lemma \cite{cover_elements_2012} states that the logarithm of the averaged minimum value of Type \RNum{2} probability of  error $\beta$
for any Type \RNum{1} probability of  error $\alpha$ smaller than one half asymptotically converges to
the inverse of the Kullback-Leibler (KL) divergence between the distributions of the two hypotheses.
Specifically, in this Bayesian framework, for any $\epsilon  \in (0,1/2)$,
\begin{align}
\lim_{n \to \infty} \frac{1}{n} \log \beta_{n}^{\epsilon} = -D(P_{Y^{M}_{A}}||P_{Y^{M}}),
\end{align}
where $\beta_{n}^{\epsilon}$ is the minimum $\beta$ for $\alpha < \epsilon$ when $n$ $M$-dimensional samples are available and $D(\cdot ||\cdot)$ is the KL divergence.
Therefore, for the attacker, minimizing the asymptotic detection probability is equivalent to minimizing $D(P_{Y^{M}_{A}}||P_{Y^{M}})$, where $P_{Y_A^M}$ and  $P_{Y^M}$ denote the probability distributions of $Y_A^M$ and  $Y^M$, respectively. The minimization of the KL divergence ensures that the effect of the attack on the induced distribution over the measurements is minimized, i.e. the attack is stealthy \cite{hou_effective_2014}.

In this information-theoretic setting, the attacker minimizes $I(X^{N};Y_{A}^{M})$
and  $D(P_{Y^{M}_{A}}||P_{Y^{M}})$ simultaneously.
The stealthy attack construction strategy is introduced in the next section.

\section{Stealthy Information-Theoretic Attacks}
\label{SEC:ITA}
\subsection{Stealthy Attacks}
Following the approach in \cite{hou_effective_2014},
the attacker constructs the utility function $I(X^{N};Y_{A}^{M})+ D(P_{Y^{M}_{A}}||P_{Y^{M}})$ for the attack.
The attacker minimizes this utility function to disrupt the estimation and bypass the detection of the operator.
Note that,
\begin{IEEEeqnarray}{c}
{I(X^{N};Y^{M}_{A})  +  D( P_{{Y}^{M}_{A}}||P_{Y^{M}})} = D( P_{X^{N}{Y}^{M}_{A}}||P_{X^{N}}P_{Y^{M}}) \IEEEeqnarraynumspace \IEEEyesnumber ,
\end{IEEEeqnarray}
where $P_{X^NY^M_{A}}$ is the joint distribution of $(X^{N}, Y_{A}^{M})$.
Note also that the state variables and the compromised measurements follow a multivariate Gaussian distribution given by
\begin{equation}\label{gxya}
(X^{N}, Y^{M}_{A})
 \sim \Nc (\zerov, \Sigmam),
\end{equation}
where the block covariance matrix has following structure:
\begin{equation}
\Sigmam =
\left[
\begin{array}{cc}
\Sigmam_{X\!X} & \Sigmam_{X\!X}\Hm^{\Tt} \\
\Hm\Sigmam_{X\!X} & \Hm\Sigmam_{X\!X}\Hm^{\Tt} + \sigma^{2}\Id_{M} + \Sigmam_{A\!A}
\end{array} \right].
\end{equation}
In view of this, the minimization of $I(X^{N};Y_{A}^{M})+ D(P_{Y^{M}_{A}}||P_{Y^{M}})$ is posed as the following optimization problem:
\vspace{-2mm}
\begin{align} \label{Stealth_Obj}
\underset{{A^{M}}}{\text{min}} \ D( P_{X^NY^M_{A}}||P_{X^N}P_{Y^M}).
\end{align}

\subsection{Optimal Attack Construction}

\begin{proposition}{ \rm \cite{cover_elements_2012}}\label{KLMG}
The KL divergence between two $M$-dimensional multivariate Gaussian distributions $P_{0} = \Nc (\zerov,\Sigmam_{0})$ and $P_{1} = \Nc (\zerov,\Sigmam_{1})$ is given by
\begin{align}\label{KL}
D( P_{0}||P_{1})  = & \frac{1}{2} \left( \log\frac{|\Sigmam_{1}|}{| \Sigmam_{0}|} - M +\text{\rm tr}(\Sigmam_{1}^{-\!1}\Sigmam_{0})\right).
\end{align}
\end{proposition}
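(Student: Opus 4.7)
The plan is to compute $D(P_0\|P_1)$ directly from its definition by substituting the explicit Gaussian density and evaluating the resulting expectations under $P_0$. First, I would write
\begin{equation*}
D(P_0\|P_1)=\int f_0(\xv)\,\log\frac{f_0(\xv)}{f_1(\xv)}\,d\xv,
\end{equation*}
where for $i\in\{0,1\}$ the density is
\begin{equation*}
f_i(\xv)=\frac{1}{(2\pi)^{M/2}|\Sigmam_i|^{1/2}}\exp\!\left(-\tfrac{1}{2}\xv^{\Tt}\Sigmam_i^{-1}\xv\right).
\end{equation*}
Taking logarithms collapses the ratio to a constant plus a difference of two quadratic forms, namely
\begin{equation*}
\log\frac{f_0(\xv)}{f_1(\xv)}=\tfrac{1}{2}\log\frac{|\Sigmam_1|}{|\Sigmam_0|}-\tfrac{1}{2}\xv^{\Tt}\Sigmam_0^{-1}\xv+\tfrac{1}{2}\xv^{\Tt}\Sigmam_1^{-1}\xv,
\end{equation*}
so the entire calculation is reduced to evaluating $\EE_{P_0}[X^{\Tt}\Sigmam_0^{-1}X]$ and $\EE_{P_0}[X^{\Tt}\Sigmam_1^{-1}X]$.

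The second step is to handle these two quadratic-form expectations by the standard ``trace trick.'' Using $\xv^{\Tt}\Am\xv=\trace(\Am\xv\xv^{\Tt})$ together with linearity of expectation and of the trace, and then invoking $\EE_{P_0}[XX^{\Tt}]=\Sigmam_0$, I obtain $\EE_{P_0}[X^{\Tt}\Sigmam_0^{-1}X]=\trace(\Sigmam_0^{-1}\Sigmam_0)=M$ and $\EE_{P_0}[X^{\Tt}\Sigmam_1^{-1}X]=\trace(\Sigmam_1^{-1}\Sigmam_0)$. Substituting these into the expression above yields
\begin{equation*}
D(P_0\|P_1)=\tfrac{1}{2}\!\left(\log\frac{|\Sigmam_1|}{|\Sigmam_0|}-M+\trace(\Sigmam_1^{-1}\Sigmam_0)\right),
\end{equation*}
which is precisely the claimed identity.

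There is essentially no hard step here: the derivation is a textbook calculation, and the only mild subtlety is ensuring that the covariance matrices are positive definite so that $\Sigmam_0^{-1}$ and $\Sigmam_1^{-1}$ exist and the densities are well defined, which is the implicit hypothesis of the proposition. Since the result is stated as a citation from \cite{cover_elements_2012}, one could alternatively appeal directly to that reference rather than reproduce the computation. For self-containedness, however, the two-step plan above, write the log-ratio in closed form, then evaluate the two quadratic-form expectations via the trace identity, suffices to establish \eqref{KL}.
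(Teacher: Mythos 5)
Your derivation is correct: the log-ratio computation, the trace identity $\xv^{\Tt}\Am\xv=\trace(\Am\xv\xv^{\Tt})$, and the evaluation of the two quadratic-form expectations under $P_0$ are exactly the standard argument, and they yield \eqref{KL} as claimed. The paper itself provides no proof of this proposition---it is stated as a cited result from \cite{cover_elements_2012}---so your self-contained computation simply reproduces the textbook derivation that the citation points to, and there is nothing to reconcile between the two.
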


Combining (\ref{KL}) and (\ref{Stealth_Obj}) it follows that the optimization problem in (\ref{Stealth_Obj}) is equivalent to
\begin{align}
\underset{\Sigmam_{A\!A} \in \Sc^{M}_{+}} {\text{min}} \quad & \left[ \trace (\Sigmam_{Y\!Y}^{-\!1}\Sigmam_{A\!A}) - \log|\Sigmam_{A\!A}+\sigma^{2}\Id_{M}| \right],  \label{Stealth_Mod_Obj}
\end{align}
where $\Sc^{M}_{+}$ is the set of all $M \times M$ positive semi-definite matrices.

\begin{proposition}\label{Stealth_Model}
The optimization problem given by (\ref{Stealth_Mod_Obj}) is equivalent to minimizing a convex function within a convex set.
\end{proposition}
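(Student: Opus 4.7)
The plan is straightforward once one recognizes that the objective in (\ref{Stealth_Mod_Obj}) is a sum of a linear term and a composition of $-\log\det$ with an affine map, both viewed as functions of the variable $\Sigmam_{A\!A}$.

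First, I would verify that the feasible set $\mathcal{S}^{M}_{+}$ is convex. This is a classical fact: if $\Sigmam_{1},\Sigmam_{2} \in \mathcal{S}^{M}_{+}$ and $\lambda \in [0,1]$, then for every $\vv \in \RR^{M}$,
\begin{equation}
\vv^{\Tt}\bigl(\lambda \Sigmam_{1} + (1-\lambda)\Sigmam_{2}\bigr)\vv = \lambda \vv^{\Tt}\Sigmam_{1}\vv + (1-\lambda)\vv^{\Tt}\Sigmam_{2}\vv \geq 0,
\end{equation}
so $\lambda \Sigmam_{1}+(1-\lambda)\Sigmam_{2} \in \mathcal{S}^{M}_{+}$.

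Next, I would decompose the objective. Observe that $\Sigmam_{Y\!Y}=\Hm\Sigmam_{X\!X}\Hm^{\Tt}+\sigma^{2}\Id_{M}$ does not depend on $\Sigmam_{A\!A}$, so the map $\Sigmam_{A\!A} \mapsto \trace(\Sigmam_{Y\!Y}^{-\!1}\Sigmam_{A\!A})$ is linear in its argument and therefore convex. For the second term, consider the affine map $\Phi\colon \mathcal{S}^{M}_{+} \to \mathcal{S}^{M}_{+}$ defined by $\Phi(\Sigmam_{A\!A}) = \Sigmam_{A\!A}+\sigma^{2}\Id_{M}$. Since $\sigma^{2}\Id_{M}\succ \zerov$ and $\Sigmam_{A\!A}\succeq \zerov$, the image $\Phi(\Sigmam_{A\!A})$ lies in the open cone of positive definite matrices, where $-\log\det(\cdot)$ is a well known convex function. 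Composing a convex function with an affine map preserves convexity, so $\Sigmam_{A\!A}\mapsto -\log|\Sigmam_{A\!A}+\sigma^{2}\Id_{M}|$ is convex on $\mathcal{S}^{M}_{+}$.

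Finally, I would invoke the fact that a sum of convex functions on a common convex domain is convex. Adding the two terms yields the objective of (\ref{Stealth_Mod_Obj}), showing that the minimization is that of a convex function over the convex set $\mathcal{S}^{M}_{+}$. The only nontrivial ingredient, and the one I would cite rather than reprove, is the convexity of $-\log\det$ on the positive definite cone; everything else is a routine verification of linearity, affineness, and closure of convexity under affine precomposition and summation.
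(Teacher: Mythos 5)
Your proof is correct and follows essentially the same route as the paper's: linearity of the trace term, convexity of $-\log|\Sigmam_{A\!A}+\sigma^{2}\Id_{M}|$ (which the paper cites to a convex-optimization reference, as you do for the $-\log\det$ fact), and convexity of the cone $\Sc^{M}_{+}$. You simply spell out the routine verifications (convexity of the cone, affine precomposition) that the paper leaves implicit.
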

\begin{proof}
The trace operator is a linear operator,
and $-\log|\Sigmam_{A\!A}+\sigma^{2}\Id_{M}|$ is a convex function of the positive semi-definite matrix $\Sigmam_{A\!A}$ \cite{boyd_convex_2004}.
Therefore, the objective function in (\ref{Stealth_Mod_Obj}) is a convex function of $\Sigmam_{A\!A}$.
Since $\Sc^{M}_{+}$ forms a convex set,
the result follows immediately.
\end{proof}

\begin{theorem} \label{Stealth_OPT}
The solution to the attack construction optimization problem (\ref{Stealth_Mod_Obj})
is the covariance matrix $\Sigmam_{A\!A}^{\star} =  \Hm\Sigmam_{X\!X}\Hm^{{\text{\rm T}}}$.
\end{theorem}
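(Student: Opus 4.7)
My plan is to exploit the convexity established in Proposition \ref{Stealth_Model}: since the objective is a convex function over the convex cone $\Sc^M_+$, any point satisfying the first-order stationarity condition that also lies in the feasible set is a global minimizer. The strategy is therefore to compute an unconstrained stationary point and then verify that the positive semi-definite constraint is automatically satisfied, which makes the associated KKT multipliers vanish.

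The first step is to differentiate the objective function $f(\Sigmam_{A\!A}) = \trace(\Sigmam_{Y\!Y}^{-1}\Sigmam_{A\!A}) - \log|\Sigmam_{A\!A}+\sigma^2 \Id_M|$ with respect to the matrix variable $\Sigmam_{A\!A}$. Using the standard matrix calculus identities $\nabla_{\Sm} \trace(\Am \Sm) = \Am^{\Tt}$ and $\nabla_{\Sm} \log|\Sm + c\Id| = (\Sm + c\Id)^{-1}$, the gradient is
\begin{align}
\nabla f(\Sigmam_{A\!A}) = \Sigmam_{Y\!Y}^{-1} - (\Sigmam_{A\!A} + \sigma^2 \Id_M)^{-1}.
\end{align}
Setting this gradient to zero and inverting both sides gives the candidate optimum
\begin{align}
\Sigmam_{A\!A}^\star + \sigma^2 \Id_M = \Sigmam_{Y\!Y} = \Hm \Sigmam_{X\!X}\Hm^{\Tt} + \sigma^2 \Id_M,
\end{align}
hence $\Sigmam_{A\!A}^\star = \Hm \Sigmam_{X\!X}\Hm^{\Tt}$.

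The remaining step is to confirm feasibility: since $\Sigmam_{X\!X}\in \Sc^N_+$, for any $\vv \in \RR^M$ we have $\vv^{\Tt}\Hm\Sigmam_{X\!X}\Hm^{\Tt}\vv = (\Hm^{\Tt}\vv)^{\Tt}\Sigmam_{X\!X}(\Hm^{\Tt}\vv) \ge 0$, so $\Sigmam_{A\!A}^\star \in \Sc^M_+$. Because the constraint is inactive at $\Sigmam_{A\!A}^\star$ and the objective is convex on $\Sc^M_+$ by Proposition \ref{Stealth_Model}, the stationary point is the unique global minimizer.

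The main obstacle is likely the matrix differentiation step, since one has to be careful that $\Sigmam_{A\!A}$ is restricted to symmetric matrices and that the derivative formulas used are valid on that subspace; the usual way around this is to note that both terms in the objective are Fréchet-differentiable on the symmetric cone and that the gradient expressions above already yield symmetric matrices, so no additional symmetrization is needed. Everything else reduces to algebraic manipulation and the positive semi-definiteness check.
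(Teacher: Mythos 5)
Your proposal is correct and follows essentially the same route as the paper: differentiate the objective, identify the unique critical point $\Sigmam_{A\!A}^{\star} = \Hm\Sigmam_{X\!X}\Hm^{\Tt}$, and invoke the convexity established in Proposition \ref{Stealth_Model} to conclude global optimality. The only cosmetic difference is that the paper uses the symmetrized matrix-derivative convention (the extra $\diag(\cdot)$ terms), whereas you use the plain Fr\'echet derivative; both yield the same stationarity condition $\Sigmam_{Y\!Y}^{-1} = (\Sigmam_{A\!A}+\sigma^{2}\Id_{M})^{-1}$, and your explicit feasibility check that $\Hm\Sigmam_{X\!X}\Hm^{\Tt} \in \Sc^{M}_{+}$ is a small but welcome addition the paper leaves implicit.
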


\begin{proof}
Taking the derivative of the objective function (\ref{Stealth_Mod_Obj}) with respect to $\Sigmam_{A\!A}$  yields \cite{seber_matrix_2008}
\begin{IEEEeqnarray}{lll}
\IEEEeqnarraymulticol{3}{l}{
 \frac{\partial  \big( \trace(\Sigmam_{Y\!Y}^{-\!1}\Sigmam_{A\!A}) - \log|\Sigmam_{A\!A}+\sigma^{2}\Id_{M}| \big)}{\ \partial\Sigmam_{A\!A}} }
 \nonumber \\
& \  = \ & 2 \Sigmam_{Y\!Y}^{-\!1} - \diag (\Sigmam_{Y\!Y}^{-\!1})  \nonumber\\
&& \qquad - 2(\Sigmam_{A\!A}+\sigma^{2}\Id_{M})^{-\!1} + \diag(\Sigmam_{A\!A}+\sigma^{2}\Id_{M})^{-\!1}. \IEEEeqnarraynumspace
\end{IEEEeqnarray}
Notice that the only critical point is $\Sigmam_{A\!A}^{\star} = \Hm\Sigmam_{X\!X}\Hm^{{\text{\rm T}}}$.
The result follows immediately from combining this result with Proposition \ref{Stealth_Model}.
\end{proof}

Interestingly, the optimal attack construction depends only on the second order moments of the state variables.
Therefore, the historical data of the state variables is central to the attack construction.
From a practical point of view, making historical data and the topology of the grid available to the attacker poses a security thread to the operator.
However, the extent to which historical data aids the attack construction remains to be determined.
In fact, due to practical and operational constraints, it is safe to assume that the attacker gets access to only partial information about the second order statistics of the state variables.
In the next section, the attack performance is assessed when finite training data is available to the attacker.

\section{Attack Construction with Imperfect Second Order Statistics}
\label{SEC:AISOS}
In the following, the case in which only a limited number of realizations of the state variables are available to the attacker for covariance estimation is considered.
Given that the attack construction depends only on the second order moments of the state variables, it suffices for the attacker to estimate the covariance matrix of the state variables using training samples.
Since there is no other information available, it is assumed that the attacker estimates the covariance matrix via a sample covariance matrix construction.

\subsection{Sample Covariance Matrix}

Given a set of training data $\{X^N_{i}\}^{K}_{i=1}$ of $K$ realizations of the state variables,
the sample covariance matrix is given by
\begin{equation}
\Sm_{X\!X} = \frac{1}{K-1} \sum_{i=1}^{K} X^N_{i} (X^N_{i})^{\Tt},
\end{equation}
where $K$ is the number of samples and $X^N_{i}\in\mathbb{R}^N$ is the $i$-th training sample of the state variables.
The sample covariance matrix $\Sm_{X\!X}$ coverges asymptotically to the covariance matrix $\Sigmam_{X\!X}$ and is a positive semi-definite matrix with probability 1 when $K \geq N$ \cite{tulino_random_2004}. 
Due to the randomness of the training samples $\{X^N_{i}\}^{K}_{i=1}$, the resulting sample covariance estimate, $\Sm_{X\!X}$, is a random matrix with distribution $P_{S_{X\!X}}$.

When the attacker needs to estimate the statistical structure of the state variables, instead of the optimal attack with covariance matrix $\Sigmam_{A\!A}^{\star} = \Hm\Sigmam_{X\!X}\Hm^{\Tt}$, the attacker constructs an attack with the sample covariance matrix. Conditioned on the training data, the resulting attack vector is
\begin{equation}
\tilde{A}^{M} \sim \Nc (\zerov, \Sigmam_{\tilde{A}\!\tilde{A}}),
\end{equation}
where $\Sigmam_{\tilde{A}\!\tilde{A}} = \Hm\Sm_{X\!X}\Hm^{\Tt}$.
With these estimated statistics,
the KL divergence in (\ref{Stealth_Obj})  
conditioned on the covariance matrix obtained from the training data becomes
\begin{IEEEeqnarray}{cc}
&D( P_{X^NY^M_{\tilde{A}}|S_{X\!X}}||Q_{X^N\!Y^M}|P_{S_{X\!X}}) \label{LS_UF},
\end{IEEEeqnarray}
where $ P_{X^NY^M_{\tilde{A}}|S_{X\!X}}$ is the conditional joint distribution of $(X^N, Y^M_{\tilde{A}})$ with $\Sigmam_{\tilde{A}\!\tilde{A}}\! =\! \Hm\Sm_{X\!X}\Hm^{\Tt}$ and $Q_{X^N\!Y^M}=P_{X^N}P_{Y^M}$.

\subsection{Lower Bound on Conditional KL Divergence}
The following lemma shows that the objective function in (\ref{Stealth_Obj})  for exact statistics is a lower bound on the KL divergence conditioned on the training data given by (\ref{LS_UF}).

%

\begin{lemma}\label{LS_BOUND}
The conditional divergence in (\ref{LS_UF}) for the attack vector construction with covariance $\Sigmam_{\tilde{A}\!\tilde{A}} = \Hm\Sm_{X\!X}\Hm^{{\text{\rm T}}}$ is lower bounded by the divergence in (\ref{Stealth_Obj}) with $\Sigmam_{A\!A}^{\star} = \Hm\Sigmam_{X\!X}\Hm^{{\text{\rm T}}}$, that is
\begin{equation}
D( P_{X^NY^M_{\tilde{A}}|S_{X\!X}}||Q_{X^N\!Y^M}|P_{S_{X\!X}}) \geq D( P_{X^NY^M_{A^*}}||P_{X^N}P_{Y^M}),
\end{equation}
where $P_{X^NY^M_{A^*}}$ is the joint distribution of $(X^{N}, Y_{A^*}^{M})$ when the optimal attack is constructed.
\end{lemma}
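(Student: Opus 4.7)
The plan is to exploit the fact that Theorem~\ref{Stealth_OPT} identifies a global minimizer of the KL divergence over all positive semi-definite attack covariances, and then lift this pointwise optimality through the expectation defining the conditional KL divergence.

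First I would rewrite the conditional divergence (\ref{LS_UF}) as
\begin{align}
D(P_{X^N Y^M_{\tilde A}|S_{X\!X}} \| Q_{X^N\!Y^M} | P_{S_{X\!X}}) = \int D(P_{X^N Y^M_{\tilde A}|S_{X\!X}=\sv} \| P_{X^N}P_{Y^M})\, dP_{S_{X\!X}}(\sv),
\end{align}
so the object of interest is the $P_{S_{X\!X}}$-expectation of a KL divergence that, for each realization $\sv$, coincides with the divergence analyzed in (\ref{Stealth_Obj}) when the attack covariance is set to $\Hm \sv \Hm^{\Tt}$.

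Next I would verify that $\Hm \sv \Hm^{\Tt} \in \Sc_+^M$ with probability one under $P_{S_{X\!X}}$; this follows from the fact, cited in the excerpt, that $\Sm_{X\!X}$ is positive semi-definite with probability one when $K \geq N$, and congruence with $\Hm$ preserves positive semi-definiteness. Hence each realization of the sample covariance yields a feasible point of the optimization (\ref{Stealth_Mod_Obj}).

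Then I would invoke Theorem~\ref{Stealth_OPT}: the minimum of the objective in (\ref{Stealth_Mod_Obj}) over $\Sc_+^M$ is attained at $\Sigmam_{A\!A}^{\star} = \Hm \Sigmam_{X\!X} \Hm^{\Tt}$. Since (\ref{Stealth_Obj}) and (\ref{Stealth_Mod_Obj}) are equivalent (they differ only by constants independent of the attack covariance), it follows that for every realization $\sv$,
\begin{align}
D(P_{X^N Y^M_{\tilde A}|S_{X\!X}=\sv} \| P_{X^N}P_{Y^M}) \geq D(P_{X^N Y^M_{A^*}} \| P_{X^N}P_{Y^M}).
\end{align}
The right-hand side does not depend on $\sv$, so integrating against $P_{S_{X\!X}}$ preserves the inequality and delivers the claim.

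The only real subtlety is making the equivalence between the divergence in (\ref{Stealth_Obj}) and the objective in (\ref{Stealth_Mod_Obj}) rigorous at the level of each realization, i.e.\ checking that dropping terms independent of $\Sigmam_{A\!A}$ is legitimate when interpreting the sample-covariance-based divergence. Beyond that bookkeeping step, the argument is essentially a pointwise optimality comparison followed by monotonicity of expectation, so no additional technical obstacle is anticipated.
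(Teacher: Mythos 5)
Your proof is correct, but it takes a genuinely different route from the paper's. The paper does not invoke the global optimality of $\Sigmam_{A\!A}^{\star}$ at all: after writing the conditional divergence as $\EE_{S_{X\!X}}$ of the closed-form Gaussian KL expression, it applies Jensen's inequality to the convex term $-\log|\Sigmam_{\tilde{A}\tilde{A}}+\sigma^{2}\Id_{M}|$ and linearity to the trace term, and then uses the fact that $\EE[\Sigmam_{\tilde{A}\tilde{A}}]=\Hm\,\EE[\Sm_{X\!X}]\,\Hm^{\Tt}=\Sigmam_{A\!A}^{\star}$ (unbiasedness of the sample covariance) to land exactly on the value of the objective at the optimal attack. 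Your argument instead establishes the inequality pointwise: each realization $\Hm\sv\Hm^{\Tt}$ is a feasible point of (\ref{Stealth_Mod_Obj}), Theorem \ref{Stealth_OPT} says the objective there is at least its value at the minimizer $\Sigmam_{A\!A}^{\star}$, and taking expectations of a constant lower bound is trivial. Each approach buys something: yours proves a strictly stronger statement (the bound holds for every realization of the training data, hence for \emph{any} positive semi-definite covariance estimator, biased or not), and the "bookkeeping" you flag is indeed harmless since (\ref{Stealth_Obj}) and (\ref{Stealth_Mod_Obj}) differ only by terms independent of $\Sigmam_{A\!A}$; the paper's route is self-contained in the sense that it does not inherit the correctness of Theorem \ref{Stealth_OPT} (whose critical-point argument over the closed cone $\Sc^{M}_{+}$ is itself somewhat terse), but it does rely on the unbiasedness of $\Sm_{X\!X}$, which your proof does not need.
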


\begin{proof}
We have that
\begin{IEEEeqnarray}{ll}
&D( P_{X^NY^M_{\tilde{A}}|S_{X\!X}}||Q_{X^N\!Y^M}|P_{S_{X\!X}})   \IEEEnonumber \\
&\ = D( P_{X^NY^M_{\tilde{A}}|S_{X\!X}}||Q_{X^N\!Y^M|S_{X\!X}}|P_{S_{X\!X}}) \label{eq:div_ind} \\
&\ = \EE_{S_{X\!X}}[ D( P_{X^NY^M_{\tilde{A}}|S_{X\!X}=S}||Q_{X^N\!Y^M|S_{X\!X}=S})] \\
&\ = \frac{1}{2}\EE_{S_{X\!X}}[\trace(\Sigmam_{Y\!Y}^{-\!1}\Sigmam_{\tilde{A}\!\tilde{A}})] - \frac{1}{2}\EE_{S_{X\!X}}[\log |\Sigmam_{\tilde{A}\!\tilde{A}} + \sigma^{2}\Id_{M}|]   \IEEEnonumber \\
&\ \quad \ - \frac{1}{2} \log |\Sigmam_{Y\!Y}^{-\!1}|\\
&\ \geq \frac{1}{2}\trace(\Sigmam_{Y\!Y}^{-\!1}\Sigmam_{A\!A}^{\star}) - \frac{1}{2}\log |\Sigmam_{A\!A}^{\star} + \sigma^{2}\Id_{M}| - \frac{1}{2} \log |\Sigmam_{Y\!Y}^{-\!1}| \label{LS_lb} \IEEEeqnarraynumspace\\
&\ =D( P_{X^NY^M_{A^\star}}||P_{X^N}P_{Y^M}),
\end{IEEEeqnarray}
where (\ref{eq:div_ind}) follows from the independence of $X$ and $Y$ with respect to $S_{X\!X}$ and (\ref{LS_lb}) follows from Jensen's inequality and the fact that $-\log|\Vm|$ is a convex function of $\Vm\in \Sc^{M}_{+}$. 
\end{proof}

Lemma \ref{LS_BOUND} shows that the KL divergence achieved by the attack conditioned on the training data is higher than the performance of the attack construction with exact statistics.
However, the performance of the attack constructed by the sample covariance matrix converges asymptotically in $K$ to that of the attack constructed by the exact covariance matrix.
The speed of convergence is numerically evaluated in the following section.

\section{Numerical Results} \label{SEC:NS}

The IEEE 30-Bus test system is used to simulate the DC state estimation setting in which the bus voltage magnitudes are set to 1.0 per unit.
As a result, the state estimate is obtained using the bus injections and load consumption measurements.
The Jacobian matrix $\Hm$ is determined by the branch reactances of the grid and it is computed using MATPOWER \cite{zimmerman_matpower:_2011}.

The optimal attack construction in Theorem \ref{Stealth_OPT} shows that the covariance matrix of the attack is a function of the covariance matrix of the state variables.
To simplify the simulation, a specific Toeplitz matrix with exponential decay parameter $\rho$ is adopted \cite{esnaola_maximum_2016}.
The Toeplitz matrix of dimension $N \times N$ with exponential decay parameter $\rho$ is given by $\Sigmam_{X\!X}=[s_{ij}=\rho^{|i-j|}; i, j =1, 2, \ldots, N].$
In this setting, the utility function of the optimal attack is a function of the correlation strength $\rho$ and the noise variance $\sigma^{2}$.
We define the Signal-to-Noise Ratio (SNR) to be
\begin{equation}
\textnormal{SNR}=10\log_{10}\left(\frac{\trace{(\Hm\Sigmam_{X\!X}\Hm^\textnormal{T}})}{M\sigma^2}\right).
\end{equation}
As a result, the utility function is a function of the correlation strength $\rho$ and the SNR at which the grid operates.

The performance of the optimal attack as measured by of the utility function given by (\ref{Stealth_Obj}) is shown in Fig. \ref{Fix_snr_30}.
Surprisingly, the performance of the attack is non-monotonic with the correlation strength $\rho$. Note that the maximum value of the utility function, i.e. the worst performance of the attack vector, is represented by a star.
The simulations show that higher values of SNR yield worse performance for the attacker.
Moreover, the performance of the attack is insensitive to the correlation strength, $\rho$, for a wide range of correlation values and only becomes significant when the correlation strength is large.
For low and medium range values of the SNR, the performance of the attack is governed by the SNR and the correlation strength does not play a significant role. In the high SNR regime, the performance of the attack does not change significantly with the value of the correlation strength. This observation contrasts with linearly encoded Gaussian communication systems in which the impact of correlation is significant even for the cases in which the correlation strength is low  \cite{esnaola_linear_2013}.

The tradeoff between the disruption and the probability of attack detection is shown in Fig. \ref{MI_KL_sigma_rho}.
The performance of the attack is analyzed in terms of the mutual information, $I(X^{N};Y_{A}^{M})$, and the KL divergence, $D( P_{{Y}^{M}_{A}}||P_{Y^{M}})$, that the attack induces.
Interestingly, the performance of both objectives of the utility function is similar and there is no significant difference in the effect of the SNR or the correlation strength.
This suggests that the tradeoff between disruption and detection achieved by the optimal attack construction does not change significantly with different system parameters.
It is only when the value of the correlation strength is high that the performance gain obtained in terms of mutual information grows faster than the performance gain obtained from the KL divergence improvement.
From a practical point of view, this suggests that the attacker expects to inflict a similar disruption on the grid for a given probability of detection regardless of the system parameters $\rho$ and SNR.

In the following, the performance of the optimal attack construction is numerically evaluated when imperfect second order statistics are available to the attacker.
In particular, the sample covariance matrix estimate discussed in Section \ref{SEC:AISOS} is used to assess the performance of the attack when limited training data is available.
The performance of the attack using a sample covariance matrix when SNR=10 dB and SNR=20 dB is shown in Fig. \ref{LS_SNR_10} and Fig. \ref{LS_SNR_20}, respectively.

\begin{figure}[t!]
\centering
\includegraphics[scale=0.4]{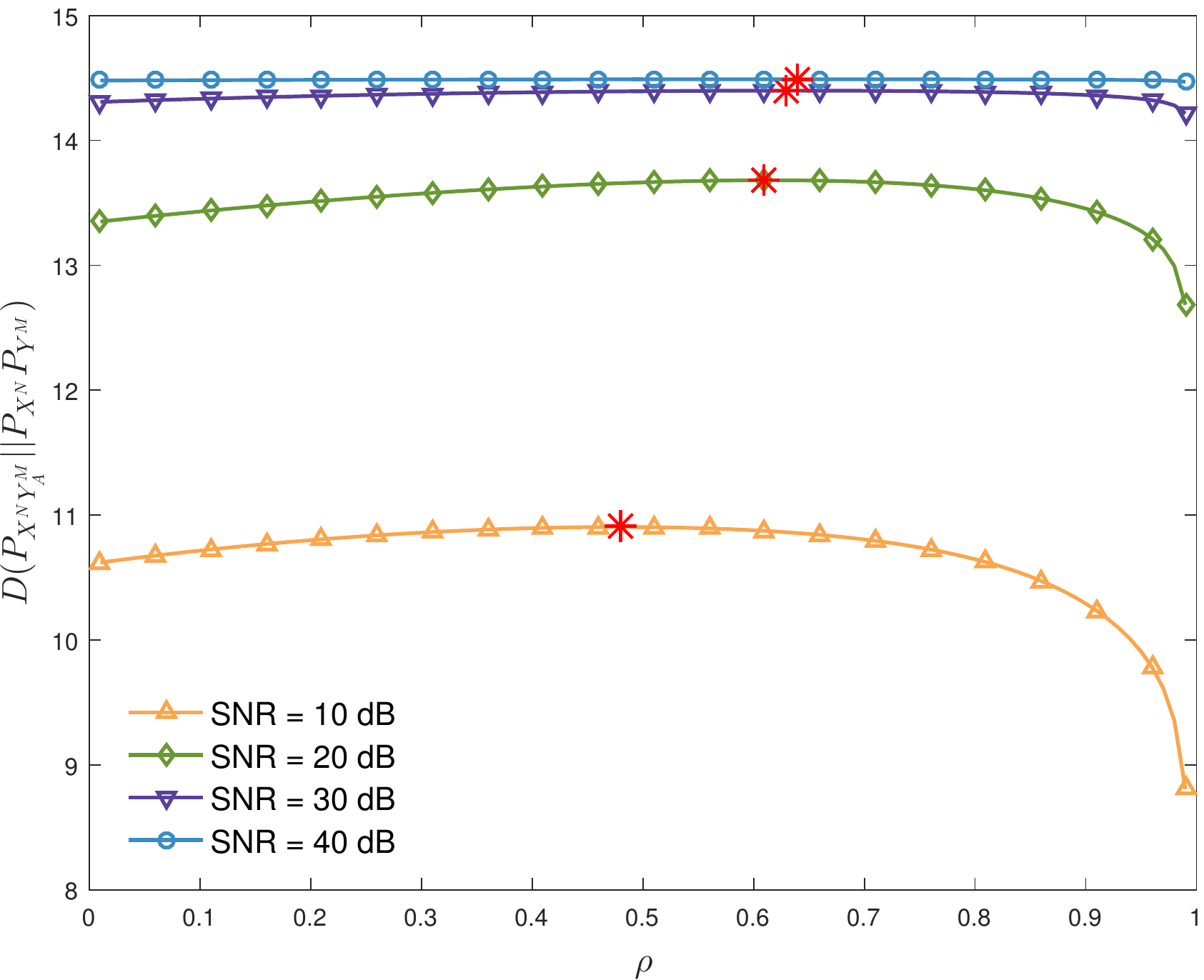}
\caption{ Performance of the optimal attack in terms of the utility function for different values of SNR in the IEEE 30-Bus test system.}
\label{Fix_snr_30}
\end{figure}

\begin{figure}[!t]
\begin{center}
\includegraphics[scale=0.4]{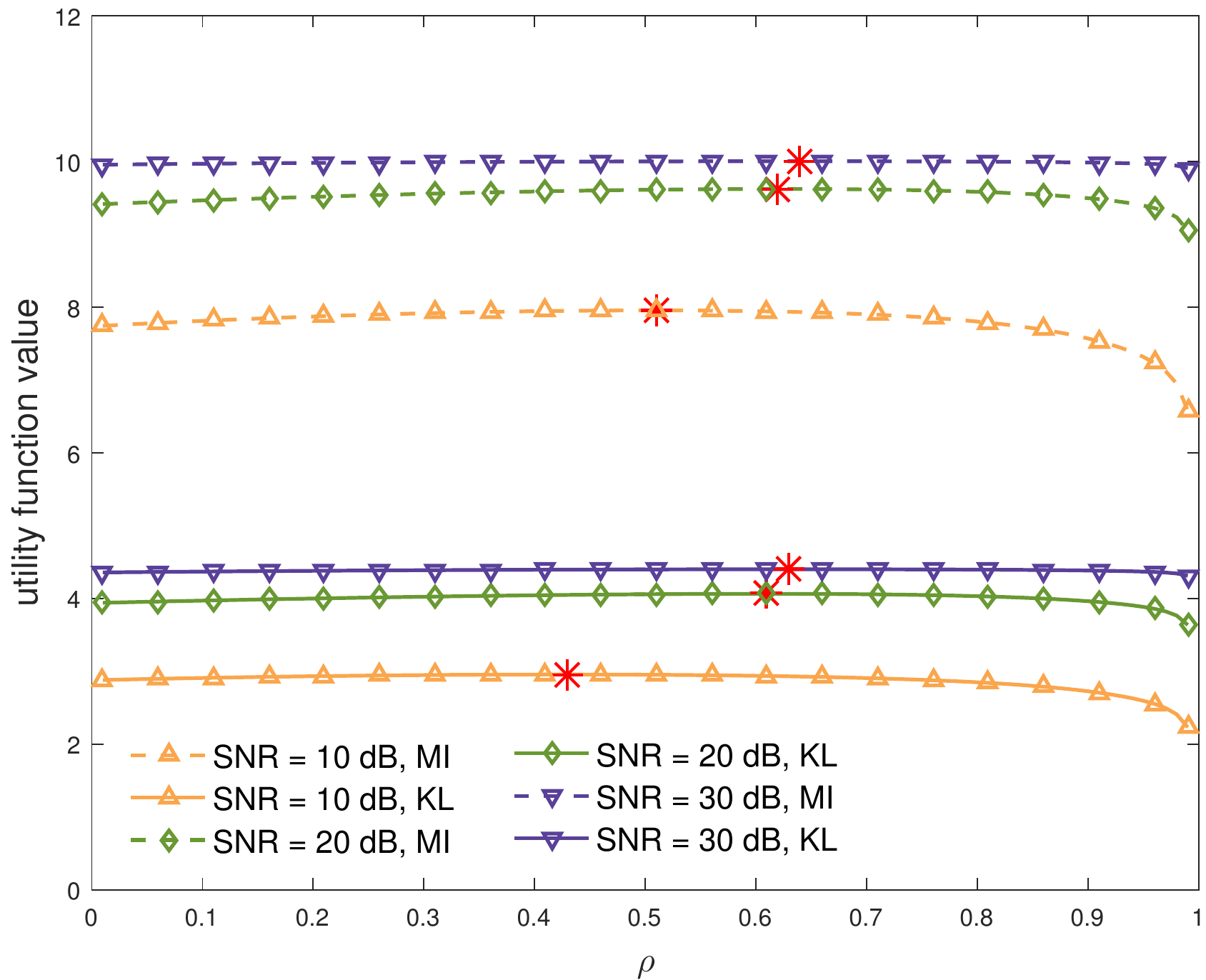}
\caption{ Performance of the optimal attack in terms of the mutual information (MI) and the KL divergence for different values of SNR in the IEEE 30-Bus test system.}
\label{MI_KL_sigma_rho}
\end{center}
\end{figure}

Therein, the dashed lines depict the performance of the optimal attack when the real covariance matrix is known, while the solid lines depict the performance of the attack constructed with the sample covariance matrix obtained with a limited number of training samples.
To guarantee the positive definiteness of $\Sm_{X\! X}$, the number of samples is larger than the size of $\Sm_{X \! X}$.
For each point, 100 realizations of the sample covariance are obtained and the utility function value is averaged over these realizations.

Interestingly, the convergence speed changes significantly for different values of the correlation strength.
The convergence is faster for lower values of the correlation strength while the impact of the SNR is not significant.
This suggests that although the performance of the optimal attack does not change significantly with respect to the correlation strength when perfect second order statistics are available, in a more realistic setting a low correlation between the state variables provides an advantageous situation for the attacker.

\begin{figure}[!t]
\begin{center}
\includegraphics[scale=0.4]{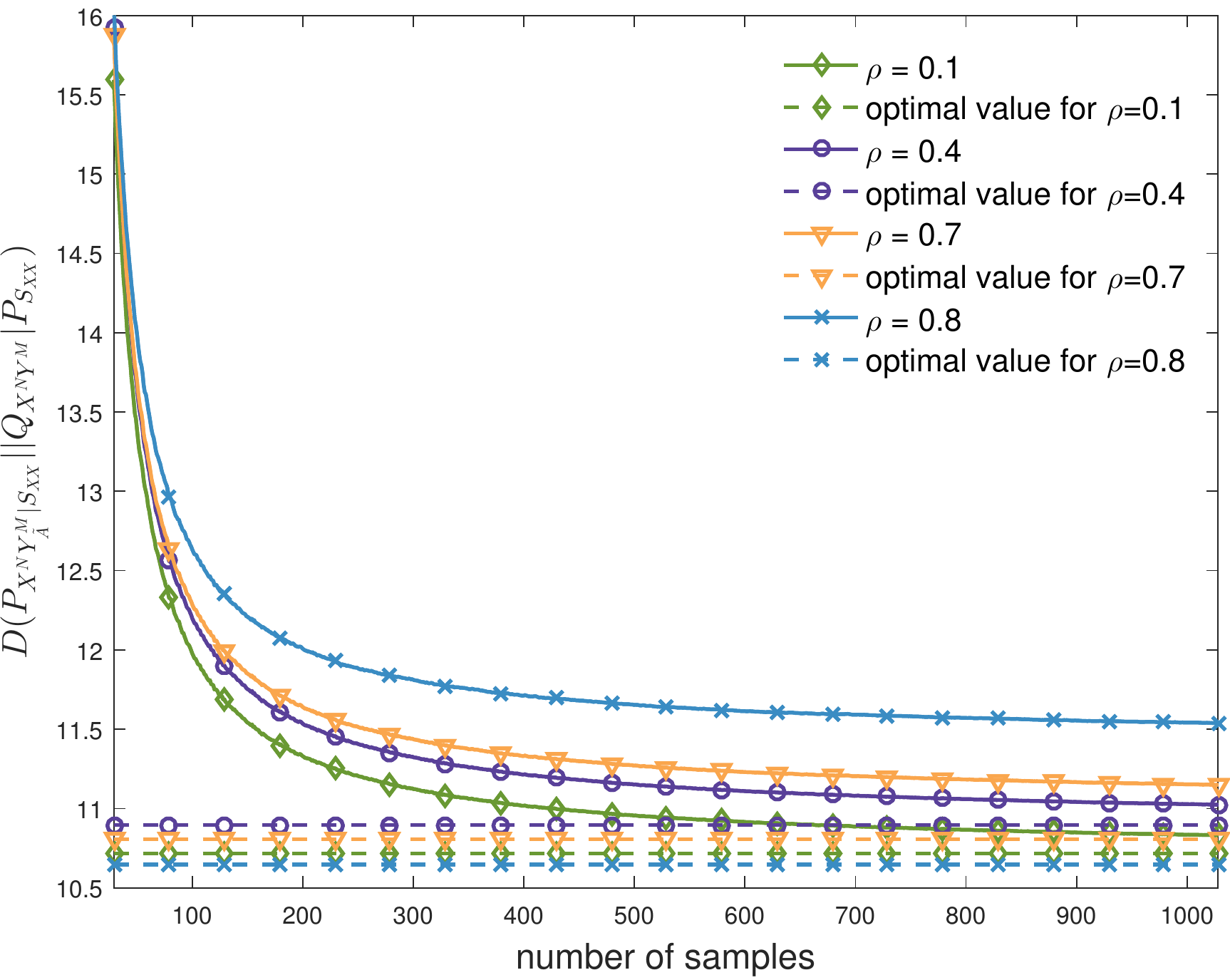}
\caption{Performance of the optimal attack for different sizes of the training set and different values of correlation strength when  SNR = 10 dB in the IEEE 30-Bus test system. }
\label{LS_SNR_10}
\end{center}
\end{figure}

\begin{figure}[!t]
\begin{center}
\includegraphics[scale=0.4]{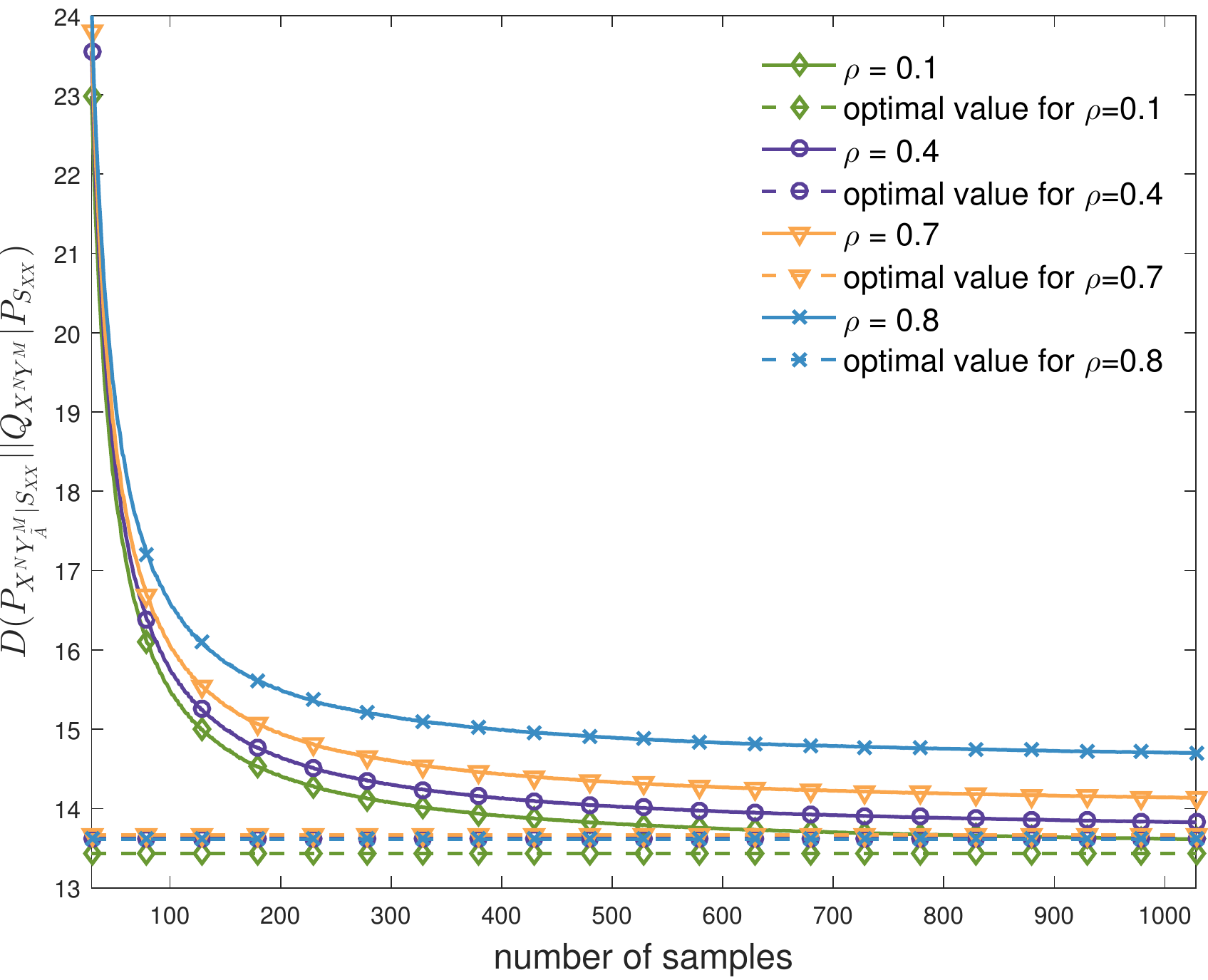}
\caption{ Performance of the optimal attack for different sizes of the training set and different values of correlation strength when  SNR = 20 dB in the IEEE 30-Bus test system }
\label{LS_SNR_20}
\end{center}
\end{figure}


Fig. \ref{LS_Attfron} shows the normalized Frobenius norm between the attack when a sample covariance matrix is used and the optimal attack with perfect second order statistics of the state variable, i.e. $\frac{||\Sigmam_{A\!A}^{\star} - \Sigmam_{\tilde{A}\!\tilde{A}}||_{F}}{||\Sigmam_{A\!A}^{\star}||_{F}}$.
\begin{figure}[!t]
\begin{center}
\includegraphics[scale=0.4]{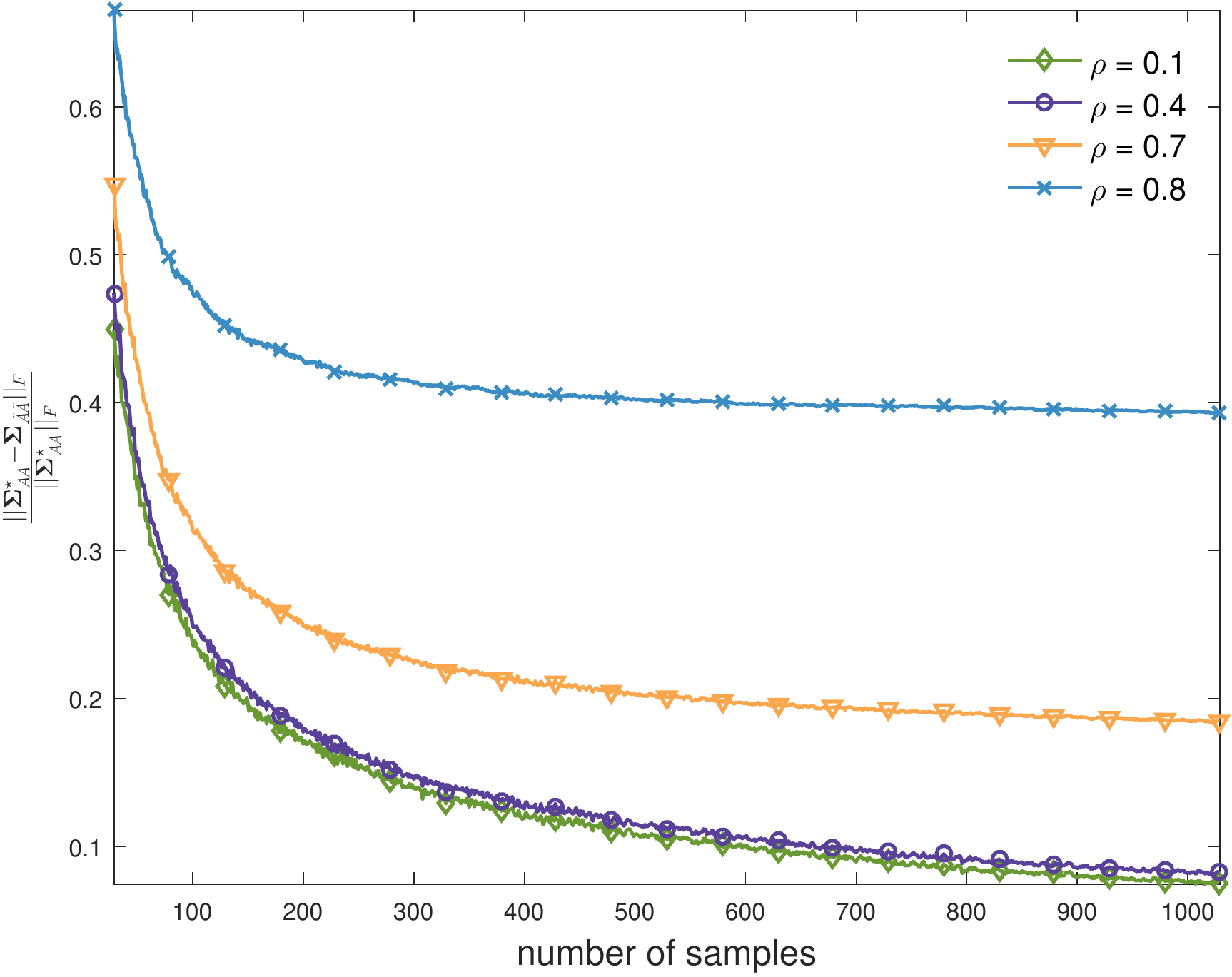}
\caption{Normalized Frobenius norm of the difference between the attack using sample covariance matrix and the optimal attack on IEEE 30-Bus system }
\label{LS_Attfron}
\end{center}
\end{figure}
The difference between the attack using the sample covariance matrix and the optimal covariance matrix decreases with the number of samples.
This implies that the attack using the sample covariance matrix converges asymptotically to the optimal covariance matrix.
However, different values of the correlation strength $\rho$ result in different convergence speeds.
Fig. \ref{Fix_snr_30} shows that when SNR = 10 dB, the performance of the optimal attack when $\rho=0.1$ and $\rho=0.8$ is almost the same.
Similarly to what is observed with the utility functions,  Fig. \ref{LS_Attfron} shows that larger values of $\rho$ converge more slowly, and as a result, the attacker needs a larger set of training samples to obtain the same performance.
Ultimately there is a trade-off between the performance of the attack and the correlation strength $\rho$ governing the state variables.
On one hand, larger correlation strength yields better attack performance.
On the other hand, larger correlation strength requires more training samples which implies the attack statistics are more difficult to learn.


\section{Conclusion} \label{SEC:C}

We have proposed a stealthy attack construction strategy within a Bayesiam framework for the smart grid.
The proposed attack construction maximizes the disruption on the state estimation that the operator obtains while minimizing the probability of attack detection.
Information-theoretic measures have been used to model the utility function for the attack construction.
Specfically, the disruption has been captured by the mutual information between the state variables and the compromised measurements, while the probability of detection has been incorporated via the KL divergence between the distributions of the measurements with and without an attack.
The resulting optimization problem has been shown to be convex and closed form expressions have been obtained.
The performance of the optimal attack construction has been numerically evaluated in an IEEE 30-Bus test system.
The impact of imperfect statistical knowledge about the state variables has also been assessed via simulations for the case in which the attacker uses a sample covariance matrix.
It has been observed that the correlation between state variables plays a critical role in the performance of the attack when limited training samples are available to the attacker.
%

\balance
\bibliographystyle{IEEEbib}
\bibliography{reference_IE}

\end{document}